\newcommand{\T}{^\top}%用来定义转置上标
\newtheorem{rem}{Remark}
\newtheorem{assu}{Assumption}
\newtheorem{lemma}{Lemma}
\title{\LARGE \bf
Accurate Control under Voltage Drop for Rotor Drones}
\author{Yuhang Liu, Jindou Jia, Zihan Yang, Kexin Guo$^*$, Bin Yang, Lidan Xu, and Taihang Chen
        % <-this % stops a space
\thanks{*This work was supported in part by the National Key Research and Development Program of China (No.2022YFB4701301), National Natural Science Foundation of China (No.62273023), Major Science and Technology Innovation Program of Hangzhou (No.2022AIZD0137), and Beijing Nova Program under Grant (No.20230484266).}% <-this % stops a space
\thanks{Yuhang Liu and Zihan Yang are with the School of Aeronautic Science and Engineering, Beihang University, Beijing 100191, China. E-mail: lyhbuaa@buaa.edu.cn and snrt\_zzhan@buaa.edu.cn.}
\thanks{Jindou Jia is with the School of Automation Science and Electrical Engineering, Beihang University, Beijing 100191, China. E-mail: jdjia@buaa.edu.cn.}
\thanks{Bin Yang is with the School of Reliability and Systems Engi-
neering, Beihang University, Beijing 100191, China. E-mail: yangbin2021@buaa.edu.cn.}
\thanks{Lidan Xu and Taihang Chen are with the School of Cyber Science and Technology, Beihang University, Beijing 100191, China. E-mail: lidanxu@buaa.edu.cn and zb2039101@buaa.edu.cn.}
\thanks{Kexin Guo is with the School of Aeronautic Science and Engineering, Beihang University, Beijing 100191, China, and also with the Hangzhou Innovation institute, Beihang University, Hangzhou 310051, China. E-mail: kxguo@buaa.edu.cn. (* Corresponding author)}
}
\begin{document}

\maketitle
\begin{abstract}

This letter proposes an anti-disturbance control scheme for rotor drones to counteract voltage drop (VD) disturbance caused by voltage drop of the battery, which is a common case for long-time flight or aggressive maneuvers. Firstly, the refined dynamics of rotor drones considering VD disturbance are presented. Based on the dynamics, a voltage drop observer (VDO) is developed to accurately estimate the VD disturbance by decoupling the disturbance and state information of the drone, reducing the conservativeness of conventional disturbance observers. Subsequently, the control scheme integrates the VDO within the translational loop and a fixed-time sliding mode observer (SMO) within the rotational loop, enabling it to address force and torque disturbances caused by voltage drop of the battery. Sufficient real flight experiments are conducted to demonstrate the effectiveness of the proposed control scheme under VD disturbance.
% and the uncertainties of physical characteristics of onboard rotors
\end{abstract}

\begin{IEEEkeywords}
Rotor drones, voltage drop of battery, disturbance rejection control.
\end{IEEEkeywords}

\section{INTRODUCTION}

Rotor drones have gained widespread popularity in recent years due to their low cost, easy maintenance, and flexibility \cite{quadrotor-popular1,ICLR}. With the application of rotor drones operating in the unstructured environment, the safety of flight has received increasing attention \cite{yangbin,safe-control-Vasileios}.%,zhousicheng %Many safety-related issues emerge that need to be solved,\textcolor{red}{more examples} e.g., anti-rain disturbance. %Unlike ground mobile devices, rotor drones must rely on their own power to maintain altitude and attitude stability during aerial flight, any behavior out of control  will injure a person possibly. 
While existing works have achieved satisfying results aiming at disturbances like rain disturbance \cite{TIE}, wind disturbance \cite{wind-disturbance}, model uncertainty \cite{robust_control}, center of gravity shift \cite{COG_shift,ICUAS} and ground effect interaction \cite{ground-effect}, 
%payload disturbance \cite{payload_oscillating_disturbance}, 
few works have considered the disturbance resulted from voltage drop of the battery, i.e., voltage drop (VD) disturbance \cite{voltage_plunge1,battery_model2}. And these methods primarily rely on neural networks for nonlinear mapping, thus lacking interpretability.

Most drones are not recommended for prolonged flights or large maneuvering flights, as the chemical properties of their batteries may not support a sustained and stable current and voltage output. More specifically, due to the internal resistance of the battery under non-ideal conditions, its output voltage and current gradually decrease with the battery continues to discharge. Subsequently, the drone may suffer insufficient lift and decreased flight accuracy, leading to consequent danger. For the coaxial drone used in this letter which needs to perform long-endurance flights or payload-transport missions, the high demand of power may lead to an obvious voltage drop of the battery and the height of the drone decreases simultaneously, as illustrated in Fig. \ref{Trajectory0931}. Further, voltage drop of the battery is also a common challenge countered by other drones as executing long-time flight or aggressive maneuvers, \cite{voltage_plunge1,voltage_plunge2,voltage_plunge3}, leading to a decrease in flight stability \cite{battery-analysis}. 

Specific solutions for high-accuracy control are lacked to deal with disturbance resulted from voltage drop of the battery. In this letter, we treat the voltage drop of the battery as the VD disturbance and estimate it by a specially designed voltage drop observer (VDO), which utilizes real-time state measurement of the drone to improve the estimation performance. Besides, a fixed-time slide mode observer (SMO) is designed to estimate the torque disturbance, which is in the rotational loop and caused by voltage drop of the battery. Through experimental verification, the proposed control scheme addressing the voltage drop problem has been confirmed to be an effective and innovative method.

% \begin{figure}
%     \centering
%     \includegraphics[width=0.25\textwidth]{}
%     \caption{Trajectory of a drone conducting long-time flight. As the flight time progresses, the drone descends gradually. The longer the flight time, the darker the blue color, indicating a lower height.}
%     \label{Trajectory0931}
% \end{figure}

% \begin{figure}
%     \centering
%     \subfloat{
%         \includegraphics[width=0.21\textwidth]{Fig/VDO-descent-scenario-4}
%     }
%     \subfloat{
% 	\includegraphics[width=0.21\textwidth]{}
%     }
%     \caption{Flight scenario and trajectory of a drone conducting long-time flight. It can be seen that the drone descends gradually as the flight time progresses. Moreover, without any intervention measures, the actual flight altitude of the drone would significantly deviate from the desired altitude, potentially leading to safety hazards.}
%     \label{Trajectory0931}
% \end{figure}

% \begin{figure}
%     \centering
%     \includegraphics[width=0.48\textwidth]{}
%     \caption{Flight scenario and trajectory of a drone conducting long-time flight. It can be seen that the drone descends gradually as the flight time progresses. Moreover, without any intervention measures, the actual flight altitude of the drone would significantly deviate from the desired altitude, potentially leading to safety hazards.}
%     \label{Trajectory0931}
% \end{figure}

\begin{figure}[t]
    \centering
    
    % 图片 (b) 和 (c) 并列在一行
    \includegraphics[width=0.28\textwidth]{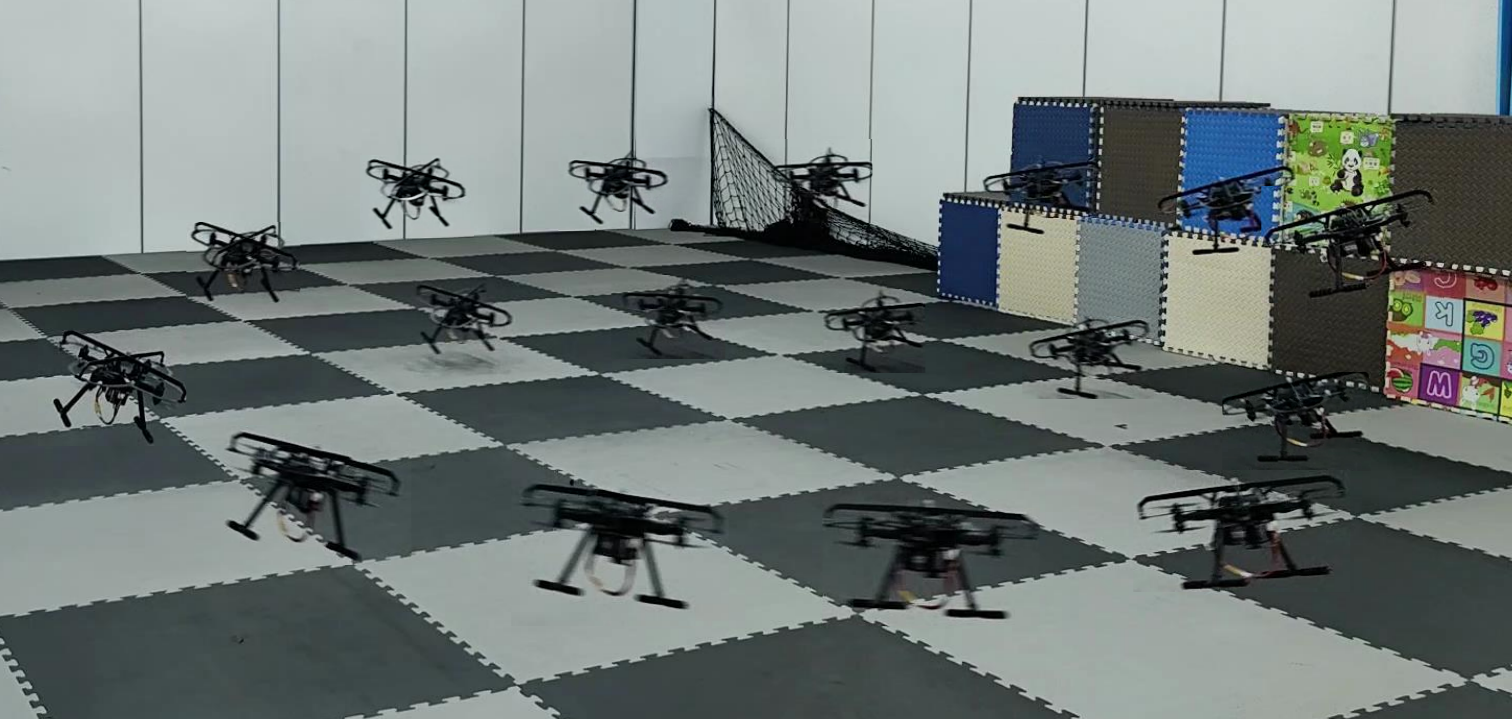}
    % \hfill % 添加水平间距
    \hspace{-0.25cm} % 负间距减少空白
    \includegraphics[width=0.2005\textwidth]{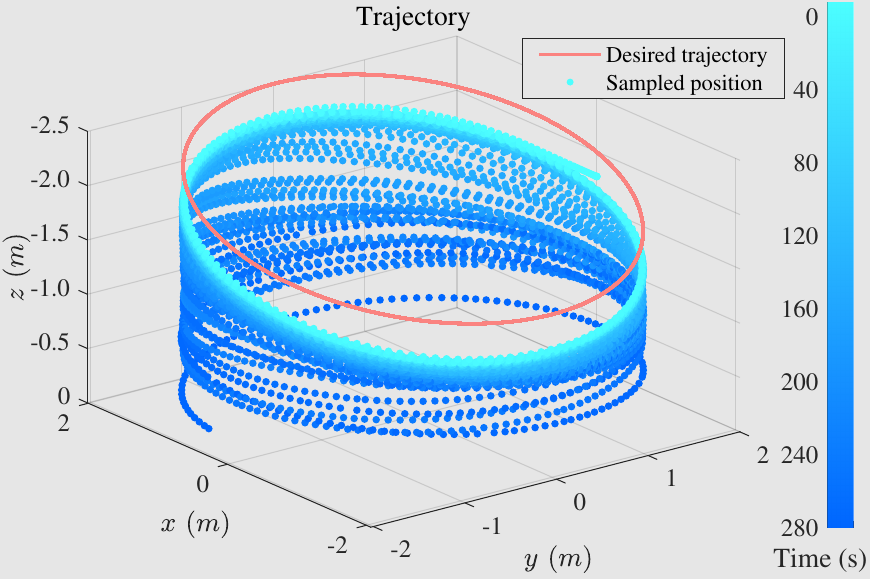}
    
    \caption{Flight scenario and trajectory of a drone conducting long-time flight. It can be seen that the drone with baseline controller descends gradually as the flight time progresses.} % 主图的图例 % Moreover, without any intervention measures, the actual flight altitude of the drone would significantly deviate from the desired altitude, potentially leading to safety hazards.
    \label{Trajectory0931}
\end{figure}
To recap, our contributions include:
\begin{itemize}
    \item A VDO based control scheme is proposed, which effectively decouples the lift loss resulting from the voltage drop of the battery and the state information of the drone. The VDO would enable the drone to conduct long-duration flights or execute aggressive maneuvers with consistent performance, maintaining stability and control accuracy despite variations in battery voltage.
    \item A rotor drone with coaxial structure of rotors is designed to increase its load capacity and mitigate the gyroscopic effect of rotors. Furthermore, the drone is capable of stable flight, ensuring reliability and adaptability for various missions. However, the increased weight from additional rotors makes the drone susceptible to the VD disturbance.
    \item The effectiveness of the proposed control scheme is verified under voltage drop of the battery compared to existing control schemes \cite{DOBC}. The experimental results demonstrate that the proposed control scheme effectively ensures the stability of altitude control for unmanned aerial vehicles during long-duration flights, even when faced with significant battery voltage reduction.
\end{itemize}

The rest of this letter is arranged as follows. Section \ref{related work} reviews related works. Section \ref{preliminaries} introduces notations and the drone model. 
%makes a symbol explanation and presents the drone model. 
Section \ref{control scheme} proposes an anti-disturbance control scheme addressing VD disturbance and the stability is analyzed. Meanwhile, the VDO is proposed in this section. The real-world experiments are detailed in Section \ref{verification}. Section \ref{conclusion} summarizes this letter.

%previous works related to anti-disturbance control, rain disturbance, and \textcolor{red}{voltage plunge}. \ref{preliminaries} 
\section{RELATED WORK}\label{related work}
In this section, some previous researches about anti-disturbance control on drones, and voltage drop of the battery of drones are listed.

\subsection{Anti-disturbance Control of Drones}

Anti-disturbance control has been developed for decades, and many efforts have been made to address different types of disturbances \cite{TIE,wind-disturbance,robust_control, ICUAS,COG_shift,ground-effect}. %drag_utlize, contact_and_aerodynamic_force_estimation,payload_oscillating_disturbance}. 
For example, \cite{robust_control} proposes a robust control scheme to deal with model uncertainties and disturbances for rotor drones. However, the robust control method usually adopt the norm-bounded constraints on time-varying disturbances, exhibiting conservative characteristics. Subsequently, control methods based on disturbance observers have been developed, demonstrating reduced conservativeness. 
\cite{wind-disturbance} introduces a frequency-based wind gust estimation method using a nonlinear disturbance observer (NDO), achieving higher accuracy by considering gust frequency. However, the effectiveness of the proposed method is not validated through closed-loop experiments. \cite{COG_shift} proposes an aerodynamic drag model that considers variations in the drag surface, effectively decoupling the state information of the drone and external disturbances. Based on the wind model, a disturbance observer is designed to effectively suppress the influence of aerodynamic drag. Inspired by these related works, the work in this letter aims to decouple the VD disturbance and the state information of the drone, enabling more accurate disturbance estimation and subsequent compensation.
 %把已有工作分类，一句话总结提取中心思想，做出评价，再谈谈自己工作与之不同之处。

\subsection{Voltage Drop of the Battery of Drones}

Due to the open-loop control structure of electronic speed controller (ESC) in most drone applications, it is hard to guarantee that the actual rotation speed of each rotor is consistent with its desired speed as the result of voltage drop of the battery, and other elements. However, few works reported the impact of voltage drop on rotor drones and provide a reasonable and feasible solution. \cite{voltage_plunge1} randomizes the mapping relationship between the desired thrust and the desired rotation speed of each rotor to help drones learn how to deal with voltage drop of the battery. However, the learning process requires extensive data and time, which is laborious. \cite{battery_model2} simulates the battery voltage using a gray-box battery model \cite{battery_model1}, which is also laborious for all involved quantities should have been identified from extensive data. With respect to this issue, the VDO is proposed in this letter to estimate the lift loss produced by the voltage drop of the battery, decoupling the states of the drone and the lift loss, and reducing the conservativeness of conventional NDOs.

\section{PRELIMINARIES}\label{preliminaries}
This section mainly introduce the used notations and the model of the coaxial octocopter drone.
\subsection{Notations} 
% In this letter, the transpose of $*$ is represented by $\left(  *  \right)\T$ and $\mathbb {R}^{m \times n}$ represents an $m \times n$-dimensional real space. $ s_* $ and $ c_* $ denote $ \sin (*) $ and $ \cos (*) $, respectively. ${\left(  *  \right)_d}$ denotes the desired $*$. The maximum and minimum eigenvalues of a matrix are represented by ${\lambda _M}\left(  * \right)$ and ${\lambda _m}\left(  * \right)$, respectively.  Moreover, let $\hat{*}$ represent the estimation of $*$, while $\dot{*}$ and $\ddot{*}$ denote the first-order time derivative and second-order time derivative of $*$, respectively. $*^{\times}$ represents the skew-symmetric operator of vector $*$. $\left\| * \right\|$ denotes the Euclidean norm for a vector $*$ or Frobenius norm for a matrix $*$, while $\left\| * \right\|_1$ represents the Manhattan norm of a vector $*$.

In this letter, the transpose of $*$ is denoted as $\left( * \right)\T$, and $\mathbb{R}^{m \times n}$ represents an $m \times n$-dimensional real space. The notations $s_*$ and $c_*$ correspond to $ \sin (*) $ and $ \cos (*) $, respectively. The desired value of $*$ is indicated as ${\left(  *  \right)_d}$. The largest and smallest eigenvalues of a given matrix are represented by ${\lambda _M}\left( * \right)$ and ${\lambda _m}\left( * \right)$, respectively. Additionally, $\hat{*}$ denotes the estimation of $*$, while $\dot{*}$ and $\ddot{*}$ represent the first and second order time derivatives of $*$, respectively. The skew-symmetric matrix associated with vector $*$ is expressed as $*^{\times}$. The Euclidean norm of a vector or the Frobenius norm of a matrix is denoted as $\left\| * \right\|$, whereas the Manhattan norm of a vector is given by $\left\| * \right\|_1$.

As a convention, the skew-symmetric operator of a vector $\boldsymbol{x} = {\left[{\begin{array}{*{20}{c}}{{x_1}}&{{x_2}}&{{x_3}}\end{array}} \right]\T}$ is defined as 
	\begin{align*}
	{\boldsymbol{x}^ \times } = \left[ {\begin{array}{*{20}{c}}
		0&{ - {x_3}}&{{x_2}}\\
		{{x_3}}&0&{ - {x_1}}\\
		{ - {x_2}}&{{x_1}}&0
		\end{array}} \right].   
	\end{align*}

\subsection{Coaxial Octocopter Drone Model}

\begin{figure}
		\centering
		\includegraphics[width=0.45\textwidth]{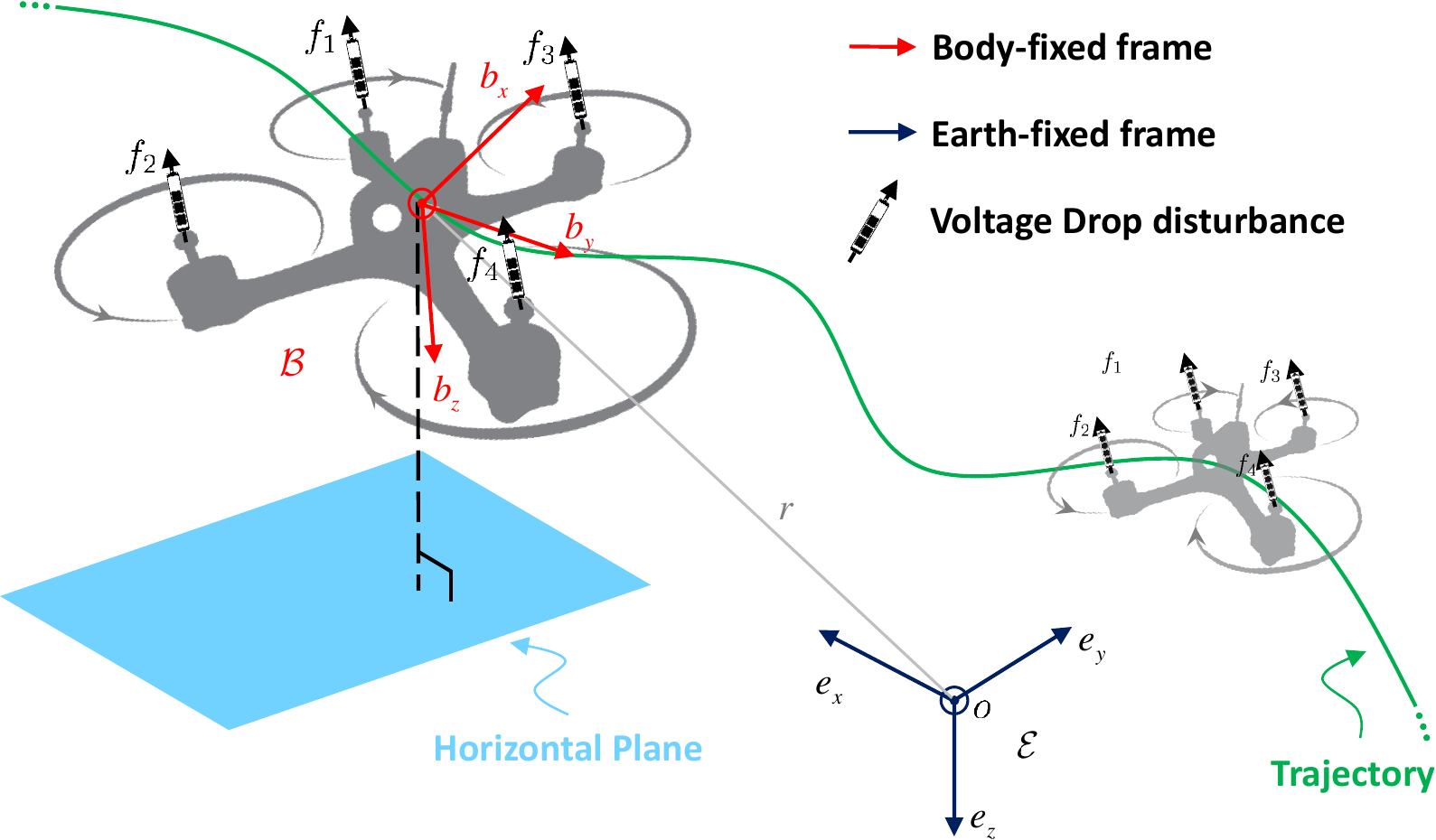}
		\caption{Schematic of a rotor drone. Two frames are defined: the earth-fixed frame (North-East-Down) $\mathcal {E}$ and the body-fixed frame $\mathcal {B}$. Disturbance caused by voltage drop in the battery is primarily considered in this letter.}
		\label{Coordinate System}
	\end{figure}
    
As illustrated in Fig. \ref{Coordinate System}, states of the drone are described in two coordinate frames: earth-fixed frame (North-East-Down) $\mathcal {E}={\left[{\begin{array}{*{20}{c}}{{\boldsymbol e_x}}&{{\boldsymbol e_y}}&{{\boldsymbol e_z}}\end{array}} \right]}$ and body-fixed frame $\mathcal {B}={\left[{\begin{array}{*{20}{c}}{{\boldsymbol b_x}}&{{\boldsymbol b_y}}&{{\boldsymbol b_z}}\end{array}} \right]}$. 
%can be represented by the rotation matrix  as
%\begin{align}
%    \boldsymbol R=\left[ \begin{matrix}
%   {{c}_{\psi }}{{c}_{\theta }} & -{{s}_{\psi }}{{c}_{\phi }}+{{c}_{\psi }}{{s}_{\theta }}{{s}_{\phi }} & {{s}_{\psi }}{{s}_{\phi }}+{{\operatorname{c}}_{\psi }}{{s}_{\theta }}{{c}_{\phi }}  \\
%   {{s}_{\psi }}{{c}_{\theta }} & {{c}_{\psi }}{{c}_{\phi }}+{{s}_{\psi }}{{s}_{\theta }}{{s}_{\phi }} & {{s}_{\psi }}{{s}_{\theta }}{{c}_{\phi }}-{{c}_{\psi }}{{s}_{\phi }}  \\
%   -{{s}_{\theta }} & {{s}_{\phi }}{{c}_{\theta }} & {{c}_{\phi }}{{c}_{\theta }}  \\
%\end{matrix} \right].
%\end{align}
To enhance clarity, ${\left(  \cdot  \right)^E}$ and ${\left(  \cdot  \right)^B}$ are the representation of the physical variables in $\mathcal {E}$ and $\mathcal {B}$, respectively. $\boldsymbol \eta^B ={{\left[ \begin{matrix}\phi  & \theta  & \psi \end{matrix} \right]}\T}$ is defined as Euler angles of the drone and ${{\boldsymbol \omega }^B}=\left[ \begin{matrix}   p & q & r \end{matrix} \right]\T$ represents the angular velocity of the drone in $\mathcal {B}$ which can be obtained by ${{\left[ \begin{matrix} {\dot{\phi }} & {\dot{\theta }} & {\dot{\psi }} \end{matrix} \right]}\T}=\boldsymbol C {{\boldsymbol \omega }^B}$, where $\boldsymbol C$ is defined in \cite{RAL-C}. 
\iffalse
\begin{align} \label{C}
   \boldsymbol C=\left[ \begin{matrix}
   1 & {{s}_{\phi }}{{t}_{\theta }} & {{c}_{\phi }}{{t}_{\theta }}  \\
   0 & {{c}_{\phi }} & -{{s}_{\phi }}  \\
   0 & {{s}_{\phi }}/{{c}_{\theta }} & {{c}_{\phi }}/{{c}_{\theta }}  \\\end{matrix} \right].
\end{align}
\fi

\iffalse
\begin{figure}
		\centering
		\includegraphics[width=0.45\textwidth]{./Fig/Schematic of a rotor drone20240112.pdf}
		\caption{Schematic of a rotor drone. Two frames are defined: the earth-fixed frame (North-East-Down) $\mathcal {E}$ and the body-fixed frame $\mathcal {B}$. Rain disturbance and voltage drop disturbance are primarily considered in this letter.}
		\label{Coordinate System}
	\end{figure}
\fi
 
The kinematics and dynamics of the coaxial octocopter drones can be modeled as follows,
    \begin{equation}
      \begin{cases}
          {\dot {\boldsymbol p}^E=\boldsymbol v^E},\\
          {\dot {\boldsymbol{R}} = \boldsymbol{R}{\boldsymbol {\omega}}^ {B\times}},
      \end{cases}
       \end{equation}
    \begin{equation}\label{dynamics}
      \begin{cases}
          m\boldsymbol{a}^E =  \boldsymbol {F}^E + {\boldsymbol G}^E + \Delta {\boldsymbol F}^E,\\
          {\boldsymbol{J} {\boldsymbol{\dot \omega}^B} =  - {\boldsymbol{\omega}^B}^\times \boldsymbol{J\omega}^B  + \boldsymbol \tau^B + \boldsymbol{\tau}_{dis}^B},
      \end{cases}
       \end{equation}
where $\boldsymbol p^E \in\mathbb{R}^{3}$, $\boldsymbol v^E \in\mathbb{R}^{3}$, and $\boldsymbol a^E \in\mathbb{R}^{3}$ represent the position, velocity, acceleration of the drone in $\mathcal{E}$ respectively. $\boldsymbol R\in \mathbb{R}^{3\times 3}$ is a rotation matrix from $\mathcal {B}$ to $\mathcal {E}$. $m$ is the mass of the drone, $\boldsymbol {F}^E=-f{\boldsymbol{b}_z}$ represents the total thrust of the drone, $\boldsymbol G^E=mg{\boldsymbol{e}_z}$ represents the gravity of the drone. $\Delta {\boldsymbol F}^E$ represents the lift loss of the drone resulted from voltage drop of the battery. $\boldsymbol J \in \mathbb{R}^{3 \times 3}$ is the inertia matrix in $\mathcal{B}$. Control input is defined as $\boldsymbol U=\left[{\begin{array}{*{20}{c}}{{f}}&{(\boldsymbol {\tau}^{B})\T}\end{array}} \right]\T$, where $f$ and $\boldsymbol {\tau}^{B}$ represent the total thrust and torque generated by all the rotors mounted on the drone, respectively. $\boldsymbol {\tau}_{dis}^B\in\mathbb{R}^{3}$ represents the moment generated by voltage drop of the battery. 

\section{CONTROL FRAMEWORK for VOLTAGE DROP}\label{control scheme}
\begin{figure}
    \centering
    \includegraphics[width=0.45\textwidth]{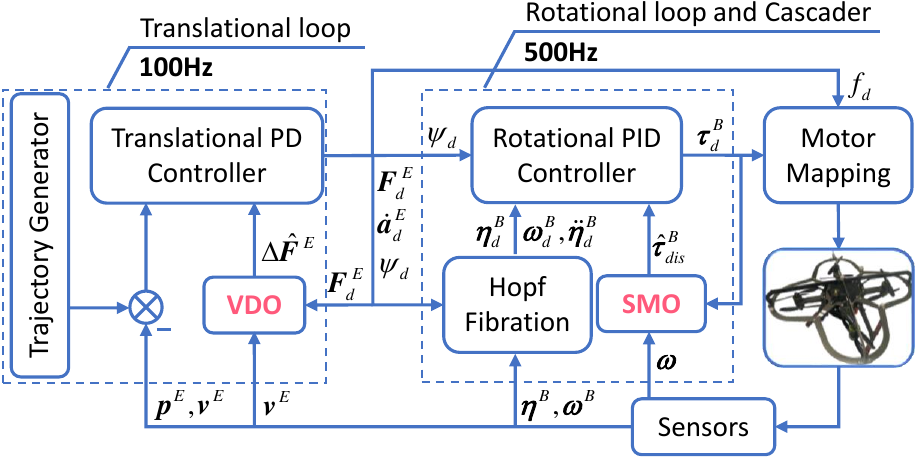}
    \caption{Control architecture of the anti-disturbance control strategy addressing VD disturbance. Two observers estimating the corresponding disturbance are designed (i.e., VDO and SMO) and embedded in the control architecture. The translational loop operates at a frequency of 100 Hz while the rotational loop  operates at a frequency of 500 Hz.}
    \label{Control scheme}
\end{figure}
This section presents an anti-disturbance control framework with stability analysis, as shown in Fig. \ref{Control scheme}. The proposed control strategy utilizes a cascade control structure \cite{cascade_control2}%drag_utlize,
, consisting of a translational loop for position control and a rotational loop for attitude control. Additionally, the VDO and SMO are introduced to address the VD disturbance. 

\subsection{Translational Loop}
\subsubsection{Baseline Controller}
The translational loop is designed to track the desired position $\boldsymbol p_d^E = {\left[ {\begin{array}{*{20}{c}}{{x_d}}&{{y_d}}&{{z_d}}\end{array}} \right]\T}$ and generate the corresponding control force $\boldsymbol F_d^E$. To quantify the tracking errors, $\boldsymbol e_{p}^{E}=\boldsymbol p_{d}^{E}-{\boldsymbol {p}^{E}}$ and $\boldsymbol e_{v}^{E}=\boldsymbol v_{d}^{E}-{\boldsymbol {v}^{E}}$ are defined to present the position tracking error and the velocity tracking error, respectively.  The baseline controller of the translational loop is designed as follows.
\begin{align} \label{translational baseline}
    \begin{cases}
        \boldsymbol a_d^E=\boldsymbol K_p  \boldsymbol e_p^E + \boldsymbol K_v \boldsymbol e_v^E -g\boldsymbol e_z + \ddot {\boldsymbol p}_d^E,\\
        \boldsymbol F_d^E=m\boldsymbol a_d^E-\Delta {\hat {\boldsymbol F}^E},
    \end{cases}
\end{align}
where $\boldsymbol K_p \in\mathbb R^{3 \times 3}$ and $\boldsymbol K_v\in\mathbb R^{3 \times 3}$ are gains of the translational controller, both of which possess a positive diagonal structure, $\Delta {\hat {\boldsymbol F}^E} \in\mathbb R^{3}$ is defined as the VD disturbance estimated by the VDO.  

\subsubsection{Voltage Drop observer}

During flight missions, particularly in long-time flight or aggressive maneuvers, continuous supply of high current leads to an increase in battery temperature, which rises the internal resistance of the battery, resulting in the voltage drop problem. To address the VD disturbance for the drone, a VDO is designed in this letter.
%the lift loss caused by the special structure of coaxial octocopter drones,
% we design the VDO.

Before procedure, $\Delta \boldsymbol F^E$ can be further detailed as
\begin{align}\label{delta F}
\Delta \boldsymbol F^E=\Delta f\cdot \underbrace{\left[ \begin{matrix}
   {{c}_{\psi }}{{s}_{\theta }}{{c}_{\phi }}+{{s}_{\psi }}{{s}_{\phi }}  \\
   {{s}_{\psi }}{{s}_{\theta }}{{c}_{\phi }}-{{c}_{\psi }}{{s}_{\phi }}  \\
   {{c}_{\theta }}{{c}_{\phi }} \\
\end{matrix} \right]}_{\boldsymbol \Theta},
\end{align}
where $\Delta f$ is a scalar representing the magnitude of thrust loss, which is caused by the voltage drop of the battery and other elements, $\boldsymbol \Theta\in \mathbb R^{3 \times 1}$ is related with the state information of the drone.

\begin{assu}\label{Assumption1}
    The first-order time derivative of $\Delta f$ is bounded, i.e., $\|\Delta \dot{f}\| \le \mu$, where $\mu$ is a constant.
\end{assu}

\begin{rem}
     $\Delta {f}$ is resulted from voltage drop of the battery and its first-order time derivative can be considered to be norm bounded, since the voltage of the battery cannot change abruptly like the step signal. Further, the rotor speed, which directly affects the produced lift, cannot change abruptly due to rotational inertia of rotors. In this case, Assumption \ref{Assumption1} can be deemed reasonable.
\end{rem}
It is clear that the VD disturbance can be explicitly separated into variation in lift value and a vector composed of euler angles of the drone, i.e., $\Delta f$ and  $\boldsymbol \Theta$. In this case, the VD disturbance can be decoupled from the state information of the drone.

Based on \eqref{delta F}, the structure of the VDO is designed as
\begin{align}\label{VDO}
   \begin{cases}
      \dot{\wp}=-\boldsymbol \zeta(\boldsymbol G^E+\boldsymbol F^E+\boldsymbol \Theta \Delta \hat{f}),  \\
       \Delta \hat{f}=\wp+\boldsymbol \zeta m{{\boldsymbol v}^{E}},  \\
   \end{cases}
\end{align}
where $\wp$ is an auxiliary variable, $\boldsymbol \zeta \in\mathbb R^{1 \times 3}$ is the gain of VDO which is required to be adjusted.

\begin{lemma}
     Considering system \eqref{dynamics}, the proposed VDO in \eqref{VDO} is stable if $\boldsymbol {\zeta \Theta}$ is positive definite.
\end{lemma}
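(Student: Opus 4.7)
The plan is to study the dynamics of the scalar estimation error $\tilde{f}=\Delta f-\Delta\hat{f}$, show that it obeys a first-order linear ODE driven by the bounded signal $\Delta\dot{f}$, and then close the argument with a quadratic Lyapunov function to obtain uniform ultimate boundedness.

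First, I would differentiate the second line of \eqref{VDO} in time to get $\dot{\Delta\hat{f}}=\dot{\wp}+\boldsymbol{\zeta}m\dot{\boldsymbol{v}}^E$. Substituting $\dot{\wp}$ from the first line of \eqref{VDO} and using the translational dynamics from \eqref{dynamics} together with the decomposition $\Delta\boldsymbol{F}^E=\boldsymbol{\Theta}\Delta f$ from \eqref{delta F}, the terms $\boldsymbol{\zeta}\boldsymbol{F}^E$ and $\boldsymbol{\zeta}\boldsymbol{G}^E$ cancel and I am left with $\dot{\Delta\hat{f}}=\boldsymbol{\zeta\Theta}(\Delta f-\Delta\hat{f})=\boldsymbol{\zeta\Theta}\tilde{f}$. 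Combining with $\dot{\tilde{f}}=\Delta\dot{f}-\dot{\Delta\hat{f}}$ gives the error dynamics
\begin{align*}
\dot{\tilde{f}}=-\boldsymbol{\zeta\Theta}\,\tilde{f}+\Delta\dot{f}.
\end{align*}

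Second, I would take the Lyapunov candidate $V=\tfrac{1}{2}\tilde{f}^{2}$ and compute $\dot{V}=-\boldsymbol{\zeta\Theta}\,\tilde{f}^{2}+\tilde{f}\,\Delta\dot{f}$. Applying Young's inequality $\tilde{f}\,\Delta\dot{f}\le\tfrac{\varepsilon}{2}\tilde{f}^{2}+\tfrac{1}{2\varepsilon}(\Delta\dot{f})^{2}$ together with Assumption \ref{Assumption1} yields $\dot{V}\le -(2\boldsymbol{\zeta\Theta}-\varepsilon)V+\mu^{2}/(2\varepsilon)$. Choosing any $\varepsilon\in(0,2\boldsymbol{\zeta\Theta})$, which is possible precisely because $\boldsymbol{\zeta\Theta}>0$, I would conclude that $V$ satisfies a comparison inequality of the form $\dot{V}\le -\alpha V+\beta$ with $\alpha>0$ and finite $\beta$, which is equivalent to uniform ultimate boundedness of $\tilde{f}$. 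This is the sense in which the VDO is stable.

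The main obstacle will be the positivity condition itself, since $\boldsymbol{\Theta}$ depends on the time-varying Euler angles and a fixed gain $\boldsymbol{\zeta}$ can only guarantee $\boldsymbol{\zeta\Theta}>0$ over an attitude set bounded away from the singular directions where $\boldsymbol{\Theta}$ is orthogonal to $\boldsymbol{\zeta}$. For the near-hover operation implicit in Fig.~\ref{Coordinate System}, $\boldsymbol{\Theta}\approx\left[0,0,1\right]\T$, and the natural choice $\boldsymbol{\zeta}=\left[0,0,\zeta_{3}\right]$ with $\zeta_{3}>0$ keeps $\boldsymbol{\zeta\Theta}=\zeta_{3}c_{\theta}c_{\phi}$ strictly positive whenever $|\phi|,|\theta|<\pi/2$. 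I would state this attitude envelope explicitly in the proof so that the positive-definiteness hypothesis is not vacuous; once the envelope is fixed, a uniform lower bound $\boldsymbol{\zeta\Theta}\ge\lambda_{\min}>0$ feeds directly into the Lyapunov inequality above and closes the argument.
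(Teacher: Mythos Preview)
Your proposal is correct and follows essentially the same route as the paper: derive the first-order error dynamics $\dot{\tilde f}=-\boldsymbol{\zeta\Theta}\,\tilde f+\Delta\dot f$ by differentiating the observer output, substituting the translational dynamics, and cancelling the $\boldsymbol F^E$ and $\boldsymbol G^E$ terms, then conclude boundedness from the positivity of $\boldsymbol{\zeta\Theta}$ together with Assumption~\ref{Assumption1}. The only differences are cosmetic (your sign convention $\tilde f=\Delta f-\Delta\hat f$ is opposite to the paper's $\Delta\tilde f=\Delta\hat f-\Delta f$) or additive: the paper's proof of the lemma stops at the error ODE and simply asserts boundedness, whereas you spell out the quadratic Lyapunov function and Young's-inequality bound explicitly, and you append a useful discussion of the attitude envelope needed to keep $\boldsymbol{\zeta\Theta}$ uniformly positive---a point the paper leaves implicit.
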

\begin{proof}
    We define $\Delta \tilde{f} = \Delta \hat{f} - \Delta f$ as the estimation error of the $\Delta f$, and it can be obtained that
\begin{align}\label{stability of VDO}
   \Delta \dot{\tilde{f}}&=\Delta \dot{\hat{f}}-\Delta \dot{f} \notag \\
   &=\dot {\boldsymbol \wp}+ \boldsymbol \zeta m \boldsymbol \dot{v}^E - \Delta \dot{f} \notag \\
   &=-\boldsymbol \zeta(\boldsymbol F^E+\boldsymbol G^E+\boldsymbol \Theta \Delta \hat{f})+\boldsymbol \zeta(\boldsymbol G^E \notag\\
   &\quad\,  + \boldsymbol F^E + \boldsymbol\Theta \Delta f)-\Delta \dot{f}\notag \\
   &=-\boldsymbol {\zeta\Theta} (\Delta \hat{f}-\Delta f)-\Delta \dot{f} \notag\\ 
   &=-\boldsymbol {\zeta\Theta} \Delta {\tilde{f}}-\Delta \dot{f},
\end{align}
where $\Delta \dot{f}$ is norm bounded, according to Assumption \ref{Assumption1}. 

It can be verified that the VDO is stable if $\boldsymbol {\zeta \Theta}$ is positive definite and the estimation error can be limited within a range related to $\Delta \dot{f}$. 
\end{proof}

%\textcolor{red}{Dynamic of lithium-ion battery one-pole one-zero; dynamic of brushless motor two-pole system; voltage plunge is also related with the battery capacity, abrupt change of $\Delta f$ is unusual?}.

\begin{rem}
    Compared with the NDO \cite{DOBC}, the conservativeness of the VDO is reduced to a large extent by utilizing the knowledge of $\boldsymbol \Theta$. And it can be applied to drones conducting long-time flight or aggressive maneuvers. 
\end{rem}
%It should be noted that the RSO will observe the lift loss of the propellers mounted on the drone and it will play the same role as the VPO. 

\begin{rem}
    The drone may undergo aggressive maneuvers and $\boldsymbol{\Theta}$ is associated with the time-varying attitude angles, making it possible for $\boldsymbol {\zeta \Theta}$ to be negative if $\boldsymbol{\zeta}$ is a constant. In this case, $\boldsymbol{\zeta}$ can be designed as a state-dependent variable, such as $\boldsymbol{\Theta}\T$.
\end{rem}

 \subsection{Stability Analysis of Translational Loop}
 
The combination of baseline controller and VDO can be proven to be stable mathematically. By resorting to \eqref{translational baseline}, we can obtain the derivatives of $\boldsymbol e_p^E$ and $\boldsymbol e_v^E$ as
\begin{align} \label{state_function}
    \underbrace{\left( \begin{matrix}
   {{{\dot{\boldsymbol e}}}_{p}}  \\
   {{{\dot{\boldsymbol e}}}_{v}}  \\
\end{matrix} \right)}_{\dot{\boldsymbol e}}=\underbrace{\left( \begin{matrix}
   {{\boldsymbol 0}_{3\times 3}} & {{\boldsymbol I}_{3\times 3}}  \\
   \frac{-{{\boldsymbol K}_{p}}}{m} & \frac{-{{\boldsymbol K}_{v}}}{m}  \\
\end{matrix} \right)}_{\boldsymbol A} \underbrace{\left( \begin{matrix}
   {{\boldsymbol e}_{p}}  \\
   {{\boldsymbol e}_{v}}  \\
\end{matrix} \right)}_{\boldsymbol e}+\underbrace{\left( \begin{matrix}
   {{\boldsymbol 0}_{3\times 3}} & {{\boldsymbol 0}_{3\times 3}}  \\
   {{\boldsymbol 0}_{3\times 3}} & \frac{{{\boldsymbol I}_{3\times 3}}}{m}  \\
\end{matrix} \right)}_{\boldsymbol B} \underbrace{\left( \begin{matrix}
   {{\boldsymbol 0}_{3\times 1}}  \\
   {\Delta {\hat {\boldsymbol F}^E}}  \\
\end{matrix} \right)}_{\tilde {\boldsymbol d}},
\end{align}
where ${\Delta{{\tilde{\boldsymbol F}}}^E}=\Delta\hat{\boldsymbol F}^E-\Delta\boldsymbol F^E=\boldsymbol \Theta {\Delta \tilde f}$.

% The detailed proof process of the translational loop can be referenced at \href{https://github.com/liuyuhangbuaa/Stability-analysis/tree/main}{Stability Analysis}.
% \iffalse
It can be checked that $\boldsymbol A$ has negative definite structure if $\boldsymbol K_p$ and $\boldsymbol K_v$ have positive definite structures. As a consequence, there must exist a positive definite symmetric matrix $\boldsymbol M$ that meets the equation $\boldsymbol A\T \boldsymbol M+\boldsymbol M \boldsymbol A=-\boldsymbol I$. Subsequently, a Lyapunov function is designed as follows,
\begin{align} \label{lyapunov}
     V = {\boldsymbol e\T} \boldsymbol{Me} + \frac{1}{2}({{\Delta \tilde f}})\T{\Delta \tilde f}.
\end{align}
%where $\boldsymbol v_m^E$ represents  $\boldsymbol v_m^E$ for ease of representation.

Differentiate $V$, it can be implied that 
%Based on \eqref{state_function} and \eqref{lyapunov}, we can obtain $\dot {\boldsymbol V}$ as
\begin{align}\label{V_dot}
  \dot{ V} =&\dot{\boldsymbol e}\T \boldsymbol M\boldsymbol e+{\boldsymbol e}\T \boldsymbol M \dot{\boldsymbol e}+({\Delta \tilde f})\T\Delta \dot{\tilde{f}}\notag \\
  = &{\boldsymbol e\T}{\boldsymbol A\T}{\boldsymbol M}{\boldsymbol e}+{\boldsymbol {\tilde d}\T}{\boldsymbol B\T}{\boldsymbol M}{\boldsymbol e}+{\boldsymbol e\T}{\boldsymbol M}{\boldsymbol A}{\boldsymbol e}\notag \\
  &+{\boldsymbol e\T}{\boldsymbol M}{\boldsymbol B}{\boldsymbol {\tilde d}}+ ({\Delta \tilde f})\T\Delta \dot{\tilde{f}} \notag \\
  %\left( {{\boldsymbol e\T}{\boldsymbol A\T} + {\boldsymbol {\tilde d}\T}{\boldsymbol B\T}} \right)\boldsymbol{Me}+ {\boldsymbol e\T}\boldsymbol M\left( {\boldsymbol {Ae} + \boldsymbol B\tilde {\boldsymbol {\boldsymbol d}}} \right)  + (\tilde{\boldsymbol v}_{r}^E)\T\dot{\tilde{\boldsymbol v}}_{r}^E \notag\\
 =& {\boldsymbol e\T}\left( {{\boldsymbol A\T}\boldsymbol M+\boldsymbol{MA}} \right)\boldsymbol e + 2{\boldsymbol e\T}\boldsymbol{MB}\tilde {\boldsymbol d} +
 %{\boldsymbol e\T}\boldsymbol{MB}\tilde {\boldsymbol d} + {{\tilde {\boldsymbol d}}\T}{\boldsymbol B\T}\boldsymbol{Me} + 
 ({\Delta \tilde f})\T\Delta \dot{\tilde{f}} \notag\\
 =&  - {\boldsymbol e\T}\boldsymbol e + 2{\boldsymbol e\T}\boldsymbol{MB}\tilde {\boldsymbol d} + ({\Delta \tilde f})\T\Delta \dot{\tilde{f}}.
\end{align}

Furthermore, by resorting to Young's inequality \cite{young's_inequality}, it can be verified that 
%According to \cite{young's_inequality} and matrix theory, we can obtain two inequalities as follows,
\begin{subequations}\label{inequalities}
    \begin{align}
        -{{\boldsymbol e}\T}\boldsymbol e &\le -\frac{{{\boldsymbol e}\T}\boldsymbol {Me}}{{{\lambda }_{M}}\left( \boldsymbol M \right)}, \\
        -({\Delta \tilde f})\T\Delta \dot{\tilde{f}} &\le \frac{1}{4}({\Delta \tilde f})\T{{\Delta \tilde f}}+\delta, \\
        {{\boldsymbol e}\T}\boldsymbol {MB}\tilde{\boldsymbol d}&\le \frac{1}{2\varepsilon }\lambda _{M}^{2}\left( \boldsymbol {MB} \right){{\boldsymbol e}\T}\boldsymbol e+\frac{\varepsilon }{2}{{\tilde{\boldsymbol d}}\T}\tilde{\boldsymbol d},
    \end{align}
\end{subequations}
where $\varepsilon$ is an arbitrary positive constant and ${\delta}=(\Delta \dot{ \tilde f})\T{\Delta \dot{ \tilde f}}$ is a bounded value related with $\Delta \dot{ \tilde f}$, which is norm bounded according to Assumption \ref{Assumption1}.

From \eqref{stability of VDO}, it can be obtained that 
\begin{subequations}\label{inequalities2}
    \begin{align}
   (\Delta \tilde{f})\T{\Delta {{\dot{\tilde{f}}}}}&=-\boldsymbol {\zeta\Theta} (\Delta \tilde{f})\T{\Delta \tilde{f}}-(\Delta \tilde{f})\T{\Delta {{\dot{\tilde{f}}}}}\notag \\ 
 & \le -{{\lambda }_{m}}(\boldsymbol {\zeta\Theta})(\Delta \tilde{f})\T{\Delta \tilde{f}}+\frac{1}{4}(\Delta \tilde{f})\T{\Delta \tilde{f}}+\delta,\\
  \label{inequalities2-b} {{{\tilde{\boldsymbol d}}}\T}\tilde{\boldsymbol d}&=(\Delta \tilde{f})\T{{\boldsymbol \Theta }\T}\boldsymbol \Theta {\Delta \tilde{f}} \le {{\lambda }_{M}}({{\boldsymbol \Theta }\T}\boldsymbol {\Theta} )(\Delta \tilde{f})\T{\Delta \tilde{f}}.
 \end{align} 
\end{subequations}

Combining \eqref{V_dot} - \eqref{inequalities2}, \eqref{V_dot} can be adjusted that
\begin{align}
\dot{ V} \le &-{{\boldsymbol e}\T}\boldsymbol e+\frac{1}{ \varepsilon }\lambda _{M}^{2}\left( \boldsymbol {MB} \right){{\boldsymbol e}\T}\boldsymbol e+ \varepsilon {{{\tilde{\boldsymbol d}}}\T}\tilde{\boldsymbol d}+(\Delta \tilde{f})\T \Delta\dot {\tilde{f}} \notag\\
 \le &-\frac{ \varepsilon -\lambda_{M}^{2}\left( \boldsymbol {MB} \right)}{ \varepsilon {{\lambda }_{M}}\left( \boldsymbol M \right)}{{\boldsymbol e}\T}\boldsymbol {Me}+ \varepsilon {{\lambda }_{M}}({{\boldsymbol \Theta }\T}\boldsymbol {\Theta} )(\Delta \tilde{f})\T{\Delta \tilde{f}}\notag\\
&-{{\lambda }_{m}}(\boldsymbol {\zeta\Theta})(\Delta \tilde{f})\T{\Delta \tilde{f}}+\frac{1}{4}(\Delta \tilde{f})\T{\Delta \tilde{f}}+\delta \notag\\
  \le&-{{\sigma }_{1}}{{\boldsymbol e}\T}\boldsymbol {Me}-{{\sigma }_{2}}(\Delta \tilde{f})\T{\Delta \tilde{f}}+\delta,
\end{align}
where ${{\sigma }_{1}}=\frac{\varepsilon -\lambda _{M}^{2}\left( \boldsymbol {MB} \right)}{\varepsilon {{\lambda }_{M}}\left( \boldsymbol M \right)}$ and ${{\sigma }_{2}}={{\lambda }_{m}}(\boldsymbol {\zeta\Theta} )-\varepsilon {{\lambda }_{M}}({{\boldsymbol \Theta }\T}\boldsymbol \Theta )-\frac{1}{4}$. $\sigma_1$ can be ensured to be positive with $\varepsilon -\lambda _{M}^{2}\left( \boldsymbol {MB} \right)>0$ satisfied. $\sigma_2$ can be guaranteed to be positive by adjusting $\boldsymbol \zeta$ properly.
We can obtain that $\dot{V} \le -\gamma V+\delta$ where $\gamma =\min \left\{ {{\sigma }_{1}},2{{\sigma }_{2}} \right\}$, which implies that $0\le V\left( t \right)\le {{e}^{-\gamma t}}\left[ V(0)-\frac{\delta }{\gamma } \right]+\frac{\delta }{\gamma }$.
%\begin{align}\label{V_final}
%    0\le V\left( t \right)\le {{e}^{-\gamma t}}\left[ V(0)-\frac{\delta }{\gamma } \right]+\frac{\delta }{\gamma }.
%\end{align}
% \fi

Finally, it can be concluded that all signals of the translational loop are globally uniformly bounded. Additionally, by appropriately adjusting $\gamma$, the estimation error of VDO can converge to an arbitrarily small residual set.
%Due to limitations in the length of this letter, a detailed elaboration is omitted here.

\iffalse
\begin{rem} \label{shibiaofenli}
The rotor drone is an underactuated system whose attitude changes faster than its position. Accordingly, the control of position and attitude of the rotor drone system can be separated into two loops respectively according to the timescale separation principle \cite{timescale_separation_principle}.
\end{rem}
\fi
\subsection{Rotational Loop}

By utilizing the Hopf Fibration \cite{Hopf_Fibration}, the desired Euler angles ${{\boldsymbol \eta }_{d}^B}$ and desired angular velocities ${\boldsymbol \omega }_d^{B}$ can be obtained. Based on these desired attitude signals, the rotational loop is then designed to ensure accurate tracking and generate the control torque $\boldsymbol \tau_d^B$.  

\subsubsection{Baseline Controller}

To prevent trajectory tracking accuracy from deteriorating due to the limited performance of the rotational loop, a rotational control strategy is designed for the rotor drone. This strategy consists of two key components: a rotational baseline controller and a fixed-time SMO. The deviation between the actual Euler angles and the reference signals is expressed as ${{\boldsymbol e}_{ \eta }}={{\boldsymbol \eta }_{d}^B}-\boldsymbol \eta^B$, while the desired angular velocity is defined as ${\boldsymbol q}_d={\boldsymbol \omega}_d^B+\boldsymbol C^{-1} \boldsymbol K_{\eta}{\boldsymbol e}_{ \eta}$. Here, $\boldsymbol K_{\eta}\in \mathbb R^{3 \times 3}$ represents a gain matrix with positive diagonal structure. Additionally, the deviation of actual angular velocities from referenced signals is given by $\boldsymbol e_{q}={\boldsymbol q}_d-\boldsymbol \omega^B$.

%$\boldsymbol \Omega=\boldsymbol C^{-1}{\boldsymbol e}_{\boldsymbol \eta }$ represents the deviation of attitude angle where $\boldsymbol C$ is presented in \eqref{C}. 
%$\boldsymbol e_{\boldsymbol \omega }^B={\boldsymbol \omega }_d^{B}-{\boldsymbol \omega }^{B}$ represents the deviation of actual angular velocities from referenced signals. 
%Based on the conception of PID control %and feedforward of angular acceleration

The rotational baseline controller is designed as
\begin{align}
    \begin{cases}
        \boldsymbol \alpha _{d}^{B}={{\boldsymbol K}_{pp}}{{\boldsymbol e}_{ q }}+{{\boldsymbol K}_{ii}}\int\limits_{{{t}_{0}}}^{t}{{{\boldsymbol e}_{q}}}d\tau,
        %{{\boldsymbol K}_{ii}}{\dot{\boldsymbol e}_{ q}},
        %+\boldsymbol K_{ii} \boldsymbol C^{-1}{\boldsymbol e}_{\boldsymbol \eta },
        %+\textcolor{red}{{{\boldsymbol K}_{ii}}\int\limits_{{{t}_{0}}}^{t}{{{\boldsymbol e}_{\boldsymbol \eta }}}d\tau} %+\ddot{\boldsymbol \eta }_{d}^{B},
        \\
        \boldsymbol \tau _{d}^{B}=\boldsymbol J \boldsymbol \alpha _{d}^{B}-\boldsymbol {\hat {\tau}} _{dis}^{B}+ {\boldsymbol{\omega}^B}^\times \boldsymbol{J\omega}^B,  \\
    \end{cases}
\end{align}
where $\boldsymbol \alpha_d^B$ represents the angular acceleration of the drone, while $\boldsymbol K_{pp} \in \mathbb R^{3 \times 3}$ and $\boldsymbol K_{ii} \in \mathbb R^{3 \times 3}$ 
%and $\boldsymbol K_{ii} \in \mathbb R^{3 \times 3}$ 
are gains of rotational baseline controller, both possessing a positive diagonal structure. The time required for the SMO to converge is represented by $t_0$, and $\boldsymbol {\hat {\tau}} _{dis}^{B}$ represents the estimated torque disturbance obtained through the SMO.

\subsubsection{Fixed-Time SMO}
Compared with the translational disturbance caused by the drop in battery voltage, the corresponding torque disturbance ranges on a small scale. Since the torque disturbance caused by drop in battery voltage is primarily resulted from the differences in characteristics between motors. Then, the torque distubance satisfies $\|{\boldsymbol{\tau}}_{dis}^B\|<\varepsilon$, where $\varepsilon$ is a small positive value. Consequently, the chattering problem that commonly follows the sliding mode controller can be weakened. The structure of the SMO is designed as
%Compared with the translational disturbance caused by rain, the torque disturbance ranges on a small scale. It can be considered to be satisfied that $\|{\boldsymbol{\tau}}_{dis}^B\|<\varepsilon$, where $\varepsilon$ is a small positive value.  The SMO is designed as
\begin{align}
\left\{ {\begin{array}{*{20}{l}}
{\dot {\boldsymbol \mu}  = -(\boldsymbol{\omega}^B)^{\times} \boldsymbol J \boldsymbol{\omega}^B + {\boldsymbol J^{-1}}{\boldsymbol \tau ^B} + {\boldsymbol \xi _1}},\\
{{\boldsymbol \xi_1} =  - {l_1}\frac{{{\boldsymbol e_1}}}{{{{\left\| {{\boldsymbol e_1}} \right\|}^{{1 \mathord{\left/
 {\vphantom {1 2}} \right.
 \kern-\nulldelimiterspace} 2}}}}} - {l_2}{\boldsymbol e_1}\left\| {{\boldsymbol e_1}} \right\| + {\boldsymbol \xi_2}},\\
{{{\dot {\boldsymbol \xi} }_2} =  - {l_3}\frac{{{\boldsymbol e_1}}}{{\left\| {{\boldsymbol e_1}} \right\|}}},\\
{{{\hat {\boldsymbol \tau} }_{dis}^B} = \boldsymbol J{\boldsymbol \xi _2}},
\end{array}} \right.
\end{align}
where ${{\boldsymbol e}_{1}}=\boldsymbol \mu -\boldsymbol \omega^B$, $l_1$, $l_2$ and $l_3$ are gains need to design.
% Under the assumption that $\left\| {{{\dot{\boldsymbol \tau }}}_{dis}^B} \right\|\le \rho$ where $\rho$ is a constant, ${\hat{\boldsymbol \tau }}_{dis}^B$ can be ensured to converge to ${{{\boldsymbol \tau }}_{dis}}$ uniformly in fixed time with ${{l}_{1}}>\sqrt{2{{l}_{3}}},l_2>0,{{l}_{3}}>4{{\left\| {\boldsymbol J^{-1}} \right\|}}\rho$. 
% %the following conditions,
% %\begin{align}
% %    {{l}_{1}}>\sqrt{2{{l}_{3}}},l_2>0,{{l}_{3}}>4{{\left\| {\boldsymbol J^{-1}} \right\|}}\rho.
% %\end{align}
Stability analysis of the SMO can be referenced in \cite{safety_Contro_wenyu} and \cite{fixed_time_second_order_sliding_mode_control}. The performance of the SMO has been validated in our previous work \cite{safety_Contro_wenyu}.

\section{PERFORMANCE VERIFICATION}\label{verification}
This section presents several real-world experiments to validate the effectiveness of the proposed control scheme. It begins with an introduction to the experimental platform, followed by experiments conducted consecutively. Finally, a quantitative analysis is presented, which includes the root mean square error (RMSE) and mean absolute error (MAE) analyses for position tracking.

\subsection{Flight Experiment Setup}

\subsubsection{Platform Setup}
\begin{figure}
    \centering
    \includegraphics[width=0.45\textwidth]{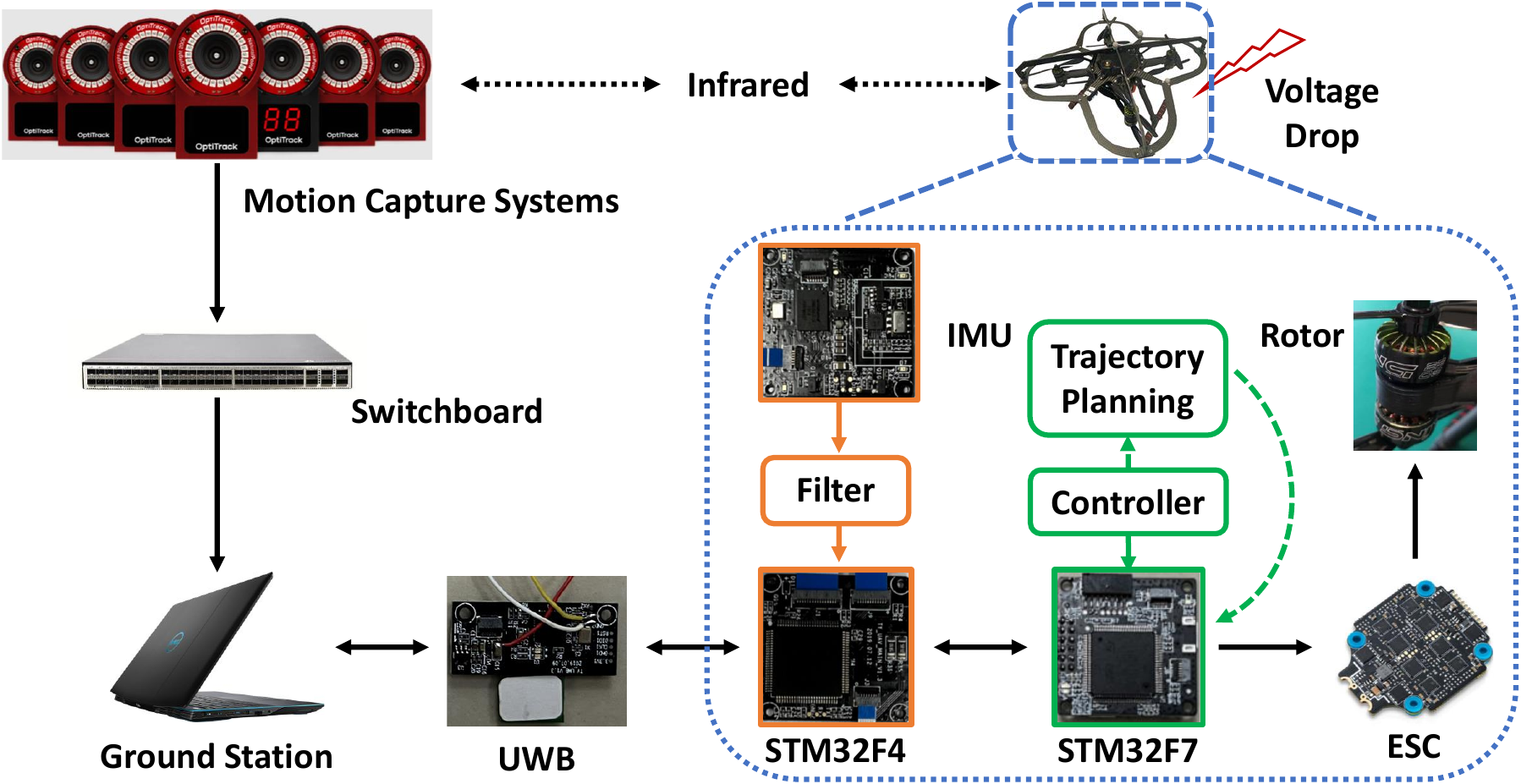}
    \caption{Hardware platform. The positioning is obtained from a motion capture system. The attitude information is obtained from the fusion of motion capture system data and the onboard IMU data, which can be regarded as having negligible noise.}
    \label{Hardware platform}
\end{figure}

As shown in Fig. \ref{Hardware platform}, the drone features four sets of propulsion units arranged in a coaxial structure to enhance its load capacity. Furthermore, the counter-rotating propellers of equal size can mitigate the gyroscopic effect. The control algorithm runs on an STM32F7, while navigation data is processed on an STM32F4, enhancing the computational capability of the onboard processors. With these specialized treatments, the drone can achieve long-duration stable flight and maintain excellent performance under VD disturbance.

%\subsubsection{Sensors setup}
%Actual positioning information of the drone is obtained through real-time kinematic (RTK), %which conducts the real-time differential process of GPS signals based on carrier phase measurement. Specifically, the ground station processes the GPS positioning data from the base station and the drone simultaneously, 
%satisfying up to centimeter-level accuracy of positioning \textcolor{red}{ref}. Actual attitude information processed in the rotational loop is obtained by the onboard inertial motion unit (IMU), not depending on any indoor auxiliary equipment which verifies the ability of the drone to fly in outdoor noisy environments. 

%\subsubsection{Controller setup}

\subsection{Flight Experiment Arrangement}

In an attempt to verify the effect of the proposed control scheme, indoor experiments are carried out without additional disturbance applied artificially. 
The drone is commanded to fly along a trajectory formulated as $\boldsymbol p_d^E=[2sin(\frac{2\pi}{T}t),2cos(\frac{2\pi}{T}t),-2.0-0.5sin(\frac{2\pi}{T}t)]\T m$, where $T$ represents the trajectory period and determines flight speed of the drone. The desired height indicates the distance between the drone and the ground plane. The drone is commanded to fly for more than 280 seconds so that the effect of voltage drop of the battery can be observed distinctly.

\begin{table}
    \centering
    \caption{Controller Parameters}
    \renewcommand\arraystretch{1.5}
    \begin{tabular}{cccc}
    % \Xhline{1.0pt}
    \toprule
         Parameter & Value & Parameter & Value \\
    % \Xhline{0.5pt}
    \midrule
         $\boldsymbol{K}_p$ & diag(31.36,31.36,7.84) & $\boldsymbol{\zeta}$ & $\boldsymbol{\Theta}\T$\\ 
         $\boldsymbol{K}_v$ & diag(11.2,11.2,5.6) & $l_1$ & 5\\
         $\boldsymbol{K}_{pp}$ & diag(32,32,24) & $l_2$  & 10\\ 
         $\boldsymbol{K}_{ii}$ & diag(5,5,5) & $l_3$ & 5\\
    % \Xhline{1.0pt}
    \bottomrule
    \end{tabular}
    \label{controller parameters}
\end{table}

\subsection{Flight Experiment Results} 

% \begin{figure}
%     \centering
%     \includegraphics[width=0.45\textwidth]{Fig/Trajectory0939.pdf}
%     \caption{Trajectory of the drone with voltage drop of battery over 280 seconds duration (baseline controller). As the flight time progresses, the drone descends gradually. The longer the flight time, the darker the blue color, indicating a lower flight height. %It can be seen that the height of the drone reduces gradually under the effect of the voltage plunge of the battery. 
% }
%     \label{trajectory0939}
% \end{figure}

\begin{figure}
    \centering
    \includegraphics[width=0.45\textwidth]{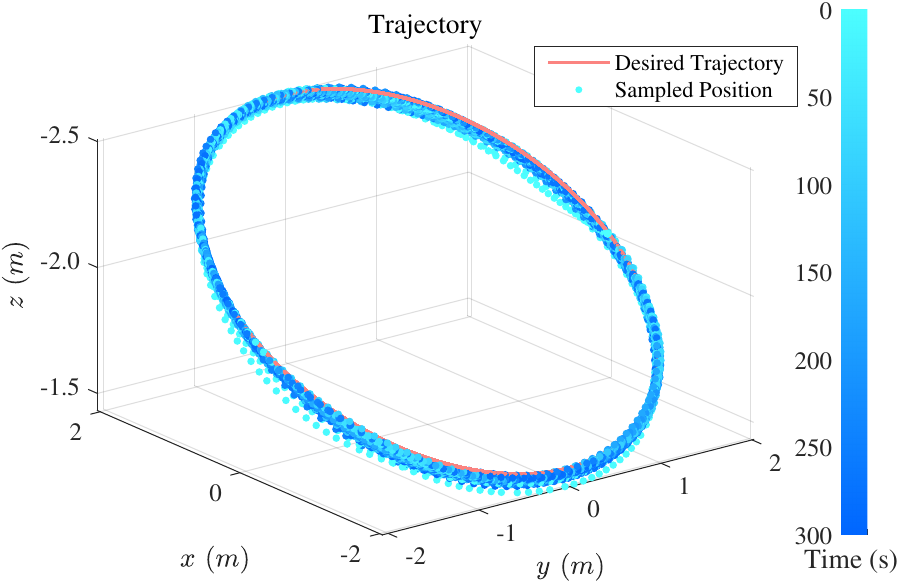}
    \caption{Trajectory of the drone with voltage drop of battery over 280 seconds duration (VDO based control scheme). As the flight time progresses, the drone holds its height successfully. 
}
    \label{VDO-effect0942}
\end{figure}

\begin{figure}
    \centering
    \includegraphics[width=0.45\textwidth]{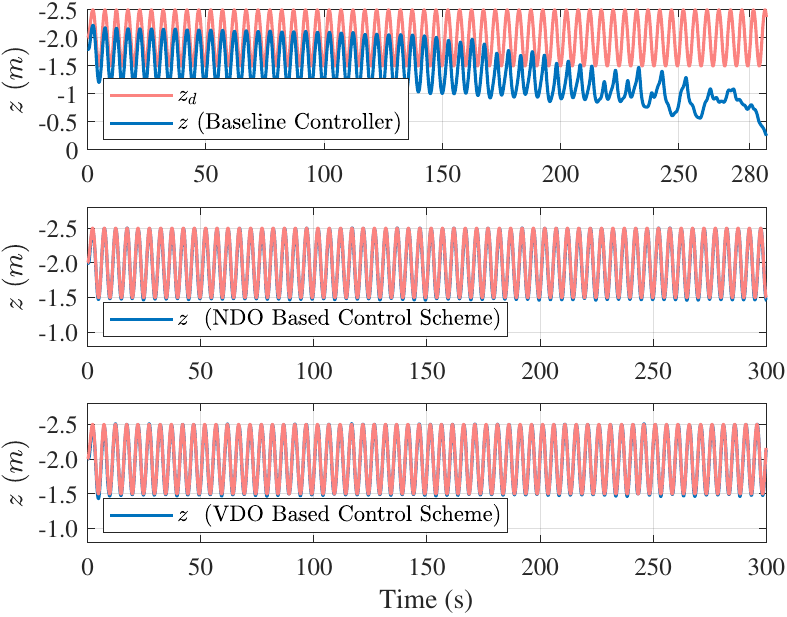}
    \caption{Position tracking performance of the drone with different control schemes. The drone with VDO or NDO based control scheme holds its height instead of descending over time successfully.}
    \label{VDO-effect0939}
\end{figure}

As illustrated in the first subplot of Fig. \ref{VDO-effect0939}, it is clear that the drone with baseline controller will descend gradually due to the voltage drop of the battery. The longer the the flight time, the more noticeable the descend phenomenon becomes. However, with the VDO enabled, the drone hold its desired height despite the voltage drop of the battery, as shown in Fig. \ref{VDO-effect0942} and the third subplot of Fig. \ref{VDO-effect0939}. Further, the VDO performs better than the NDO based control scheme in the translational loop, as illustrated in Table \ref{VDO-effect}. The initial position error illustrated in the first subplot of Fig. \ref{VDO-effect0939} is mainly caused by the lift loss resulted from coaxial rotors. 

The RMSE and the MAE are defined as
\begin{align}
    r_1&=\frac{1}{{\sqrt n }}\left\| {*_d - {*}} \right\|,\\
    r_2&=\frac{1}{n}{{\left\| {*_d - {*}} \right\|}_1},
\end{align}
where $n$ represents the sampling numbers of $*$ during flight. 

As illustrated in Table \ref{VDO-effect}, the control accuracy of VDO based scheme achieves 8.08\% and 13.22\% improvement in RMSE and MAE compared with those of the NDO based control scheme, respectively. Furthermore, compared with the baseline controller, the VDO based controller exhibits a more pronounced performance improvement, with MAE and RMSE reduced by 96.79\% and 96.92\% respectively, clearly demonstrating its superiority in enhancing control accuracy and stability of the system. As mentioned above, the poor performance of the standalone baseline controller is due to its inability to effectively address the lift loss inherent in the coaxial dual-rotor structure.
 %Fluctuations in the height of each lap is mainly caused by performance differences of drone motors. 
\begin{table}
    \caption{Comparison of Control Accuracy in $\boldsymbol e_z$ Direction}
    \renewcommand\arraystretch{1.5}
    \centering
    \begin{tabular}{cccc}
    \Xhline{1.0pt}
         \multicolumn{1}{c}{} & RMSE &MAE\\
         \Xhline{0.5pt}
         \multirow{1}{*}{Baseline controller}  & 0.7436 & 0.6388\\ 
         \multirow{1}{*}{NDO based control scheme}  & 0.0260 & 0.0227\\
         \multirow{1}{*}{\textbf{VDO based control scheme}}  & \textbf{0.0239} & \textbf{0.0197}\\ 
         \Xhline{1.0pt}
    \end{tabular}
    \label{VDO-effect}
\end{table}

For the reason that the VDO estimate $\Delta f$ instead of $\Delta {\boldsymbol F}^E$ of the drone, the conservativeness of conventional NDOs is reduced and the performance is improved. And the VDO can compensate the disturbance more timely than the baseline controller attributed to its active-disturbance-rejection characteristic inherited from the NDO \cite{DOBC}. It can be concluded that the VDO can estimate the VD disturbance accurately and timely by utilizing real-time state information of the drone.

% \begin{figure}
%		\centering
%		\includegraphics[width=0.45\textwidth]{./Fig/bar_accuracy_compare1}
%		\caption{Quantitively compared result of employed methods.}
%		\label{methods_compare}
%	\end{figure}

\section{CONCLUSIONS}\label{conclusion}
%A conclusion section is not required. Although a conclusion may review the main points of the letter, do not replicate the abstract as the conclusion. A conclusion might elaborate on the importance of the work or suggest applications and extensions. 
This letter proposes an anti-disturbance control scheme for rotor drones subject to voltage drop of the battery. By adequately utilizing the state information and decoupling the disturbance from states of the drone, a VDO is designed to address the voltage drop issue of rotor drones conducting long-time fight or aggressive maneuvers. A rigorous mathematical analysis demonstrates the stability of the proposed control scheme, while real-world flight experiments verifies its effectiveness, adequately. 
%The RSO+SMO based control scheme estimates and compensate the effect of rain disturbance finely, leading to a marked improvement in trajectory tracking accuracy compared to the NDO+SMO based control scheme. Additionally, the specially designed VDO estimates the impact of voltage drop of the battery accurately, preventing the drone from descending successfully.

%\begin{table}[h]
%\caption{An Example of a Table}
%\label{table_example}
%\begin{center}
%\begin{tabular}{|c||c|}
%\hline
%One & Two\\
%\hline
%Three & Four\\
%\hline
%\end{tabular}
%\end{center}
%\end{table}

\bibliographystyle{IEEEtran}
\bibliography{root}

\end{document}